\documentclass[10pt,letterpaper]{article}
\usepackage{spconf,amsmath,amssymb,amsthm,booktabs,array,enumitem,microtype}
\usepackage{cite}
\usepackage[T1]{fontenc}
\usepackage{xcolor}
\IfFileExists{algorithm.sty}{\usepackage{algorithm}}{\usepackage{float}\floatstyle{ruled}\newfloat{algorithm}{tbp}{loa}\floatname{algorithm}{Algorithm}}
\usepackage[hidelinks]{hyperref}
\newtheoremstyle{paperplain}{4pt}{4pt}{\itshape}{} {\bfseries}{.}{.5em}{}
\newtheoremstyle{paperdefinition}{4pt}{4pt}{}{} {\bfseries}{.}{.5em}{}
\theoremstyle{paperplain}
\newtheorem{theorem}{Theorem}
\newtheorem{prop}{Proposition}
\theoremstyle{paperdefinition}
\newtheorem{example}{Example}
\newtheorem{definition}{Definition}
\newcommand{\F}{\mathbb F_q}
\newcommand{\E}{\mathbb E}
\newcommand{\Var}{\operatorname{Var}}
\newcommand{\rank}{\operatorname{rank}}
\newcommand{\Ball}{\operatorname{Ball}}
\newcommand{\TV}{d_{\mathrm{TV}}}
\newcommand{\rh}{\rho_{\mathrm m}}
\title{Low-Rank Masking for Single-Server Matrix Multiplication}
\name{Alejandro Cohen$^{1,2}$, Rafael G. L. D'Oliveira$^3$, and Alex Sprintson$^4$}
\address{$^1$Faculty of Electrical and Computer Engineering, Technion, Israel\\ $^2$Department of Electronic Systems, Aalborg University, Copenhagen Campus, Denmark\\$^3$School of Mathematical and Statistical Sciences, Clemson University, USA\\$^4$Department of Electrical and Computer Engineering, George Mason University, USA}
\hypersetup{pdftitle={Low-Rank Masking for Single-Server Matrix Multiplication},pdfauthor={Alejandro Cohen; Rafael G. L. D'Oliveira; Alex Sprintson}}
\begin{document}
\maketitle

\begin{abstract}
We study the statistical privacy of outsourcing matrix multiplication over a finite field $\F$ to a single server using additive masks of rank at most $r$. For independent uniform $n\times n$ inputs, we show that uniform \emph{rank-ball masks} and products of independent uniform factors give maximal-correlation secrecy of at most $q^{-r}$ against the complete server view, with $O(n^2r)$ field operations for encoding and decoding. This secrecy captures how effectively the server is prevented from estimating functions of the inputs. We prove an asymptotically matching lower bound of this secrecy measure for $r=o(n)$, showing that both sampling methods are asymptotically optimal among input-independent additive masks of rank at most $r$, even when secret invertible transformations are allowed. We also characterize the posterior distribution for uniform rank-ball masks under arbitrary joint input distributions and prove approximate individual security for rows and columns under independent uniform inputs. Finally, we show that every input-independent additive mask of rank at most $r=o(n)$ requires $\delta\to1$ in entry-level $(\varepsilon,\delta)$-differential privacy for fixed field size $q$ and bounded $\varepsilon$.
\end{abstract}
\begin{keywords}
Matrix multiplication, secure outsourcing, distributed computing, individual security.
\end{keywords}

\vspace{-0.2cm}
\section{Introduction}
\label{sec:intro}
\vspace{-0.1cm}
We consider a user who has two matrices $A\in\F^{m\times n}$ and $B\in\F^{n\times p}$ and wishes to compute $A\cdot B$ with the assistance of a single server. The challenge is to limit what the server learns about $A$ and $B$ while keeping the user's encoding and recovery costs substantially below the cost of computing $AB$ locally.

Homomorphic encryption and specialized cryptographic protocols can protect inputs when outsourcing to a single server, but their privacy guarantees rely on computational hardness assumptions~\cite{bae2026fast,10.1007/978-3-032-12287-2_4}. Secure distributed matrix multiplication (SDMM) achieves perfect information-theoretic privacy against computationally unbounded servers, but requires multiple servers and a bound on how many may collude~\cite{d2026function}. The privacy guarantees of these SDMM schemes do not hold when a single server receives all the encoded inputs.

This motivates the transformation-based outsourcing approach~\cite{atallah1996secure,ATALLAH2002215,10.1145/1755688.1755695}, in which the user masks the inputs before sending them to the server and recovers the desired product from the returned result. For example, the user can upload $P_1AP_2^{-1}$ and $P_2BP_3^{-1}$, where $P_1,P_2,P_3$ are secret invertible matrices. The server then returns $P_1ABP_3^{-1}$. Using permutations and scalings makes encoding and recovery inexpensive. This line of work includes sparse-matrix protocols~\cite{8458047}, invertible transformations~\cite{LEI2014205,8700179,liu2021efficient,10318946}, and additive masks~\cite{atallah1996secure,10.1145/3055259.3055263,kumar2018secure}.

We focus on low-rank additive masking, which has been used for linear systems~\cite{7218476}, matrix multiplication~\cite{10.1145/3055259.3055263,9300174}, and secure inference~\cite{xiong2025loro}. Blockwise variants use outer-product masks~\cite{kumar2018secure}. For $n\times n$ inputs, each mask of rank at most $r$ can be written as the product of an $n\times r$ matrix and an $r\times n$ matrix. Using these factors, the user can encode the inputs and recover their product with $O(n^2r)$ field operations, while the server performs one matrix multiplication.

We then ask what statistical privacy these masks can provide. We first consider differential privacy (DP)~\cite{dwork2014algorithmic}, which limits how much the distribution of the server's observation can change when one input entry changes. However, Proposition~\ref{prop:dp} shows that every input-independent mask of rank at most $r=o(n)$ requires $\delta\to1$ in entry-level $(\varepsilon,\delta)$-differential privacy when $q$ is fixed and $\varepsilon$ is bounded.

Thus, we study partial privacy guarantees. We consider two sampling methods: multiplying independent uniform factors and choosing a mask uniformly from all matrices of rank at most $r$. For each, we describe which inputs remain possible after the server sees the uploads and how likely they are.

Each input must lie within rank distance $r$ of its upload (Definition~\ref{def:rankball}). For uniform rank-ball masks, the uploads rule out incompatible input pairs while preserving the relative probabilities of those that remain. Theorem~\ref{thm:posterior} establishes this for arbitrary joint input distributions. For independent uniform matrices, Theorem~\ref{thm:rows} gives an approximate form of individual security~\cite{8466890} for rows and columns.

Our main result concerns maximal-correlation secrecy~\cite{8316898}, which quantifies how well the server can estimate functions of the inputs. For independent uniform square inputs, Theorem~\ref{thm:rho} gives maximal correlation at most $q^{-r}$ against the complete server view for both sampling methods, with equality for independent uniform factors. Theorem~\ref{thm:converse} gives an asymptotically matching lower bound for every input-independent additive mask of rank at most $r$, even when combined with secret invertible transformations. Thus, for fixed $q$ and $r=o(n)$, both sampling methods achieve asymptotically optimal maximal-correlation secrecy within this class. 

To our knowledge, this is the first work to establish asymptotically matching achievability and converse bounds for maximal-correlation secrecy of input-independent additive matrix masks under a rank constraint with uniform inputs.

\vspace{-0.2cm}
\subsection{Related Work}

Other single-server work considers adjustable security and efficiency~\cite{zhao2025performance}. The scheme in~\cite{chiang2026mosaic} combines low-rank factors with additional noise to obtain computational security and allows approximate recovery. The additional noise means that the complete masks need not satisfy the rank constraint studied here. Over the real and complex numbers, analog SDMM studies numerical stability and mutual-information leakage~\cite{11505934}, while differentially private distributed multiplication studies privacy--accuracy tradeoffs~\cite{9834493,10206949}. In our setting, the user recovers the finite-field product exactly, and we study the uncertainty that remains about the inputs. Individual secrecy protects each message separately while allowing information about relationships between messages. This notion has been studied in secure network coding~\cite{8466890}, distributed storage~\cite{kadhe2014weakly}, and multi-secret sharing~\cite{10619205}.

\vspace{-0.2cm}
\section{Main Results}
\label{sec:results}
We study privacy against a computationally unbounded honest-but-curious (semi-honest \cite{8700179}) server. Correctness assumes that the server follows the prescribed computation; verifiability against malicious computation is outside the scope of this work. Proofs appear in Section~\ref{sec:proofs}.

\vspace{-0.2cm}
\subsection{Low-Rank Masking}
We consider a user who wishes to multiply $A\in\F^{m\times n}$ and $B\in\F^{n\times p}$ with the help of a single server. The user adds a low-rank mask to each matrix before sending it to the server. Let $1\le r\le\min\{m,n,p\}$ and $R=UV$ and $S=QW$, where $U\in\F^{m\times r}$, $V\in\F^{r\times n}$, $Q\in\F^{n\times r}$, and $W\in\F^{r\times p}$. The masks are kept secret and sampled independently of the inputs and of each other, with fresh masks for each multiplication. Algorithm~\ref{alg:outsourcing} describes the protocol.

\begin{algorithm}[ht]
\caption{Single-server multiplication with low-rank masks}
\label{alg:outsourcing}
\textbf{Input:} $A,B$ and rank parameter $r$.\\
\textbf{Output:} $AB$.
\begin{enumerate}[leftmargin=*,itemsep=1pt,topsep=3pt]
\item \textbf{Precompute.} Sample masks $R=UV$ and $S=QW$. Store the factors, $R,S$, and $P=RS$.
\item \textbf{Upload.} Send $X=A+R$ and $Y=B+S$.
\item \textbf{Compute.} The server returns $Z=XY$.
\item \textbf{Decode.} Return $Z-(AQ)W-U(VB)-P$.
\end{enumerate}
\end{algorithm}

\begin{theorem}\label{thm:complexity}
Given the mask factors, Algorithm~\ref{alg:outsourcing} computes $AB$ using $O(mnr+npr+mpr)$ field operations at the user in the worst case and one matrix multiplication at the server.
\end{theorem}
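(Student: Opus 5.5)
The plan has two parts: correctness follows from expanding the server's product, and the cost bound follows from counting operations in each step of Algorithm~\ref{alg:outsourcing} with the indicated parenthesization.

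\textbf{Correctness.} Since the server returns $Z=XY$, we have
\[
Z=(A+R)(B+S)=AB+AS+RB+RS .
\]
Substituting $S=QW$, $R=UV$, and $P=RS$ gives $AS=(AQ)W$, $RB=U(VB)$, and $RS=P$. Hence $Z-(AQ)W-U(VB)-P=AB$. Associativity lets us evaluate each term in the order shown without changing its value.

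\textbf{Cost.} The mask factors are given, and we count the remaining user work.
\begin{itemize}
\item \emph{Precompute.} Forming $R=UV$ costs $O(mnr)$ and forming $S=QW$ costs $O(npr)$. For $P=RS$ we never multiply the full matrices $R$ and $S$. Instead we use $P=U\bigl((VQ)W\bigr)$: the product $VQ$ costs $O(nr^2)$, then $(VQ)W$ costs $O(pr^2)$, then multiplying by $U$ costs $O(mpr)$. Since $r\le\min\{m,n,p\}$, we have $nr^2\le nrp$ and $pr^2\le mpr$, so the total is $O(mnr+npr+mpr)$.
\item \emph{Upload.} The additions $A+R$ and $B+S$ cost $O(mn+np)$.
\item \emph{Decode.} Computing $AQ$ costs $O(mnr)$ and $(AQ)W$ costs $O(mpr)$. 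Computing $VB$ costs $O(npr)$ and $U(VB)$ costs $O(mpr)$. The three subtractions cost $O(mp)$.
\end{itemize}
Since $r\ge1$, the terms $mn$, $np$, and $mp$ are absorbed into the bound, so the user performs $O(mnr+npr+mpr)$ field operations in the worst case. The server computes only the single product $XY$.

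The only point that needs care is the parenthesization of $P$. A naive product $RS$ would cost $O(mnp)$, which exceeds the claimed bound. The factored ordering above, together with the inequality $r\le\min\{m,n,p\}$, is what keeps the cost of computing $P$ within the claim.
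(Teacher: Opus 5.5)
Your proposal is correct and matches the paper's proof: you expand $Z=AB+AS+RB+RS$ for correctness, and you bound the cost by computing $P=U((VQ)W)$ and using $r\le\min\{m,n,p\}$. Your step-by-step operation count fills in details that the paper's one-line argument leaves implicit.
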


\begin{example}\label{ex:square}
For $n\times n$ inputs, decoding computes $AQ$, $(AQ)W$, $VB$, and $U(VB)$, each using $n^2r$ field multiplications. Decoding therefore requires $4n^2r$ multiplications, while encoding requires $O(n^2)$ additions. Precomputing $R$ and $S$ uses $2n^2r$ multiplications, and computing $RS$ as $U((VQ)W)$ uses another $n^2r+2nr^2$. Thus, the user's computation is quadratic in $n$ for constant $r$ and $o(n^3)$ when $r=o(n)$.
\end{example}

\subsection{Privacy within Rank Balls}
\label{sec:regions}

We now ask which inputs remain possible after the server sees an upload and how likely they are. Since $X=A+R$ and $\rank(R)\le r$, observing $X=x$ tells the server that $\rank(x-A)\le r$. We use the rank distance.

\begin{definition}\label{def:rankball}
The rank distance between $a,x\in\F^{d\times e}$ is $\rank(x-a)$. The rank ball of radius $r$ centered at $x$ is $\Ball_r(x)=\{a\in\F^{d\times e}:\rank(x-a)\le r\}$.
\end{definition}

We consider two ways to sample the masks. \emph{Low-Rank Ball} samples $R$ uniformly from $\Ball_r(0)$ and keeps a factorization $R=UV$ for decoding. \emph{Low-Rank Factors} samples all entries of $U,V$ independently and uniformly and sets $R=UV$. Both methods can produce every mask of rank at most $r$, but assign different probabilities to them.

After observing $X=x$, the conditional probability of each input $a$ is proportional to $\Pr[A=a]\Pr[R=x-a]$. For uniform rank-ball masks, the second factor is the same for every $a\in\Ball_r(x)$, giving the following result.

\begin{theorem}\label{thm:posterior}
Let $A$ have any distribution and let $R$ be a Low-Rank Ball mask independent of $A$. For every upload $x$ with positive probability,
\begin{equation}\label{eq:posterior}
\Pr[A=a\mid X=x]=\frac{\Pr[A=a]\,\mathbf1\{a\in\Ball_r(x)\}}{\Pr[A\in\Ball_r(x)]}.
\end{equation}
For $A,B$ with any joint distribution and Low-Rank Ball masks sampled independently of each other and of $(A,B)$, the posterior after uploads $x,y$ of positive joint probability is the joint prior restricted to $\Ball_r(x)\times\Ball_r(y)$ and normalized.
\end{theorem}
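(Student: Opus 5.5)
The plan is a direct Bayes computation over the finite space, using that a Low-Rank Ball mask has constant probability mass on its support. Let $N=|\Ball_r(0)|$. Then $\Pr[R=s]=N^{-1}\mathbf 1\{\rank(s)\le r\}$ for every $s$.

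\textbf{Single upload.} Fix $x$ with $\Pr[X=x]>0$.
\begin{enumerate}[leftmargin=*,itemsep=1pt,topsep=3pt]
\item By independence of $A$ and $R$, $\Pr[A=a,X=x]=\Pr[A=a]\Pr[R=x-a]$.
\item Since $\rank(x-a)\le r$ holds exactly when $a\in\Ball_r(x)$, this equals $N^{-1}\Pr[A=a]\mathbf 1\{a\in\Ball_r(x)\}$.
\item Summing over $a$ gives $\Pr[X=x]=N^{-1}\Pr[A\in\Ball_r(x)]$. This is positive by assumption, so division is legitimate.
\item Dividing the two expressions, the factor $N^{-1}$ cancels and yields \eqref{eq:posterior}.
\end{enumerate}

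\textbf{Joint uploads.} Let $N_1=|\Ball_r(0)|$ in $\F^{m\times n}$ and $N_2$ the analogous count in $\F^{n\times p}$. Because $(A,B)$, $R$, $S$ are mutually independent,
\[
\Pr[A=a,B=b,X=x,Y=y]=\Pr[A=a,B=b]\Pr[R=x-a]\Pr[S=y-b].
\]
By the same observation, the right-hand side equals
\[
(N_1N_2)^{-1}\Pr[A=a,B=b]\,\mathbf 1\{a\in\Ball_r(x)\}\,\mathbf 1\{b\in\Ball_r(y)\}.
\]
Summing over $(a,b)$ gives $\Pr[X=x,Y=y]=(N_1N_2)^{-1}\Pr[(A,B)\in\Ball_r(x)\times\Ball_r(y)]$. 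This is positive by hypothesis. Taking the ratio, the constant $(N_1N_2)^{-1}$ cancels, leaving the joint prior restricted to the product of balls and normalized.

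No step is a real obstacle. The only points needing care are the equivalence $x-a\in\Ball_r(0)\iff a\in\Ball_r(x)$, which is immediate from Definition~\ref{def:rankball}, and checking that the denominator is nonzero, which follows from the positive-probability hypothesis.
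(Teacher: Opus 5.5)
Your proposal is correct and is the same argument as the paper's: you use the constant likelihood of a uniform rank-ball mask on compatible inputs (zero elsewhere) and apply Bayes' rule, which the paper states in one line. You simply spell out the normalizing constants $N$, $N_1N_2$ and the nonzero-denominator check explicitly.
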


For uniform rank-ball masks, an upload $x$ rules out inputs at rank distance greater than $r$ from $x$ and preserves the relative probabilities of those that remain. If only one of these inputs has positive prior probability, the server can identify it. Two inputs within rank distance $r$ of the same upload can still have different upload distributions, since each produces uploads in a rank ball centered at itself.

\vspace{-0.2cm}
\subsection{Limits on Differential Privacy}
\label{sec:dp}
We now show that low-rank masking cannot provide useful differential privacy when $r=o(n)$. This limitation comes from the rank bound and holds for every input-independent mask distribution.

\begin{definition}[\cite{dwork2014algorithmic} ]\label{def:dp}
Two inputs are neighbors if they differ in exactly one entry. For $\varepsilon\ge0$ and $\delta\in[0,1]$, the upload $X$ satisfies entry-level $(\varepsilon,\delta)$-differential privacy if
\[
\Pr[X\in\mathcal E\mid A=a]\le e^\varepsilon\Pr[X\in\mathcal E\mid A=a']+\delta
\]
for every set of uploads $\mathcal E$ and every pair of neighboring inputs $a,a'$. The case $\delta=0$ is called pure differential privacy.
\end{definition}

The following bound holds for every input-independent mask distribution with rank at most $r$.

\begin{prop}\label{prop:dp}
Let $K\in\F^{n\times n}$ be independent of the input, with $\rank(K)\le r<n$ almost surely. If $A+K$ satisfies entry-level $(\varepsilon,\delta)$-differential privacy, then
\begin{equation}\label{eq:dp}
\delta\ge1-\frac{(e^\varepsilon+q-1)\log_2q}{q-1}\frac rn.
\end{equation}
No such mask gives pure differential privacy for any finite $\varepsilon$.
\end{prop}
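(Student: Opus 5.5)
The plan is to prove the two claims of Proposition~\ref{prop:dp} separately: the quantitative bound~\eqref{eq:dp} by an entropy (compression) argument over all $n^2$ entries, and the impossibility of pure differential privacy by a direct distinguishing event.

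\emph{Pure differential privacy.} Let $s=\max\{\rank(k): \Pr[K=k]>0\}\le r<n$, and fix $k$ in the support with $\rank(k)=s$.
\begin{enumerate}[leftmargin=*,itemsep=1pt,topsep=3pt]
\item Since $s<n$, there are indices $i,j$ with $e_i\notin$ the column space of $k$ and $e_j^\top\notin$ the row space of $k$. Then $\rank(k-e_ie_j^\top)=s+1$.
\item Take any input $a'$ and its neighbor $a=a'+e_ie_j^\top$, and let $\mathcal E=\{x:\rank(x-a')\ge s+1\}$.
\item Under $A=a'$ we have $X-a'=K$, which has rank at most $s$, so $\Pr[X\in\mathcal E\mid A=a']=0$.
\item Under $A=a$ we have $X-a'=K+e_ie_j^\top$. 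This lies in $\mathcal E$ with probability at least $\Pr[K=k]>0$. (Step 1 applies to $k-e_ie_j^\top$; the same choice of $i,j$ works for $k+e_ie_j^\top$, since it only uses the column and row spaces of $k$.)
\end{enumerate}
Hence no finite $\varepsilon$ satisfies the definition with $\delta=0$.

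\emph{Quantitative bound.} The approach is to compare an upper bound on $H(K)$ with a lower bound built up entry by entry.
\begin{enumerate}[leftmargin=*,itemsep=1pt,topsep=3pt]
\item \textbf{Upper bound.} $K$ is supported on $\Ball_r(0)$, and each matrix of rank at most $r$ is determined by a choice of $r$ spanning columns and coefficients. This gives $H(K)\le\log_2|\Ball_r(0)|\lesssim (2n-r)r\log_2 q$.
\item \textbf{Chain rule.} Order the entries and write $H(K)=\sum_{(i,j)}H(K_{ij}\mid K_{<(i,j)})$, where $K_{<(i,j)}$ denotes the entries preceding $(i,j)$.
\item \textbf{Per-entry lower bound.} The goal is to show that each term is at least of order $\frac{q-1}{e^\varepsilon+q-1}(1-\delta)$. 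For this, fix the input $0$ and its $q-1$ neighbors $c\,e_ie_j^\top$ with $c\ne0$. For each value $v$, consider the event that $X$ agrees with a fixed prefix $x_{<(i,j)}$ and that $X_{ij}=v$.
\item \textbf{Apply DP with all shifts.} The input $c\,e_ie_j^\top$ shifts only coordinate $(i,j)$ and leaves the prefix unchanged. Applying the DP inequality to these events, summing over $c$ including $c=0$, and averaging over the prefix shows that the conditional law of $K_{ij}$ given the prefix cannot concentrate. Concretely, its maximal conditional probability is at most $\frac{e^\varepsilon}{e^\varepsilon+q-1}$ plus a $\delta$-type slack. This is where the factor $e^\varepsilon+q-1$ comes from.
\item \textbf{Conversion to entropy.} Translate the anti-concentration into entropy via $H\ge -\log_2 p_{\max}$, or more crudely via the linear bound $H\ge (1-p_{\max})\cdot$const.
\item \textbf{Combine.} Summing the $n^2$ per-entry bounds and comparing with step 1 yields $n^2(1-\delta)\frac{q-1}{e^\varepsilon+q-1}\lesssim nr\log_2 q$, which rearranges to~\eqref{eq:dp}.
\end{enumerate}

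\emph{Main obstacle.} The hard part is steps 3--5 of the quantitative bound. The DP guarantee concerns events in $X$ for fixed inputs, but it must be turned into a conditional anti-concentration statement for $K_{ij}$ given the previous entries, with the $\delta$ slack surviving the averaging over prefixes linearly. I would first prove the bound on $\Pr[K_{ij}=\text{best guess}\mid\text{prefix}]$ averaged over prefixes. To do this, I would apply DP to the single event $\mathcal E$ consisting of all $x$ whose $(i,j)$ entry equals the MAP guess computed from $x_{<(i,j)}$. Because that guess depends only on the unshifted prefix, each shift $c$ moves mass from this event to a disjoint event, and summing over the $q$ values of $c$ gives the required inequality.

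Matching the constant exactly as stated also requires care. Step 1 gives $2nr$ rather than $nr$, so I may need a sharper entropy bound, for example by conditioning on a row space so that only $nr$ entries are free, or a sharper per-entry bound that counts both row and column structure. I would settle this only after the per-entry inequality is in hand.
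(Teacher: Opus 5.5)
\textbf{What is correct.} Your pure-DP argument is correct. It differs harmlessly from the paper's: the paper uses singleton events to show that the support of $K$ is invariant under every single-entry translation, hence equals $\F^{n\times n}$, while your rank-raising event $\{x:\rank(x-a')\ge s+1\}$ works equally well. For~\eqref{eq:dp}, your skeleton matches the paper's. You use the chain rule, and you compare the MAP-guess event with the $q-1$ nonzero shifts of one entry to get $(q-1)p_{ij}\le e^\varepsilon(1-p_{ij})+(q-1)\delta$; conditioning on the prefix rather than on all other entries is fine. You then finish with a counting bound on $H(K)$.

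\textbf{The gap.} The step you leave open, converting this anti-concentration into entropy, is exactly where the constant in~\eqref{eq:dp} is decided. The route you name, $H\ge-\log_2p_{\max}$ with Jensen over prefixes, gives at best $H(K_{ij}\mid K_{<(i,j)})\ge-\log_2p_{ij}$. Comparing $n^2$ such terms with $2nr\log_2q$ yields only $1-\delta\le\frac{e^\varepsilon+q-1}{q-1}\bigl(1-q^{-2r/n}\bigr)$. For small $r/n$, the regime of interest, this is weaker than~\eqref{eq:dp} by a factor tending to $2\ln2$.

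Your proposed repair, lowering the upper bound to about $nr\log_2q$, cannot work. The rank constraint alone permits $H(K)=\log_2|\Ball_r(0)|\approx(2n-r)r\log_2q$, which the uniform rank-ball mask attains. Conditioning on the row space of $K$ also breaks the per-entry step. The guessing rule in your DP event must be a function of the upload $X$, and $X$ does not reveal the row space of $K$. Moreover, given the row space, whole columns of $K$ can be deterministic.

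\textbf{The missing ingredient.} You need the sharp linear bound: any distribution whose largest probability is $p$ has Shannon entropy at least $2(1-p)$ bits. This is the Hellman--Raviv bound; it follows from concavity of the minimal entropy on each interval $[1/(k+1),1/k]$. Because it is linear in $p$, it survives averaging over the conditioning and gives $H(K)\ge\frac{2n^2(q-1)(1-\delta)}{e^\varepsilon+q-1}$. Its factor $2$ then cancels exactly against $H(K)\le2nr\log_2q$, obtained by counting factor pairs $(U,V)$.
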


For fixed $q$ and bounded $\varepsilon$, the bound forces $\delta\to1$ when $r=o(n)$. At $\delta=1$, the privacy condition places no restriction on the upload distribution. Thus, no choice of input-independent mask distribution can give useful differential privacy in this regime.

\vspace{-0.2cm}
\subsection{Approximate Individual Security}
\label{sec:individual}
We next ask how much the upload reveals about each row or column separately. We take $A$ to be uniform over $\F^{n\times n}$, let $1\le r<n$, and write $X=A+R$, where $R$ is independent of $A$. We regard the rows $M_i=A_{i,:}\in\F^n$ as independent uniform messages.

\begin{definition}[\cite{8466890}]\label{def:individual}
For $T\subseteq[n]$, let $M_T=(M_i)_{i\in T}$. The messages $M_1,\ldots,M_n$ satisfy $T$-individual security with respect to $X$ if $M_T$ is independent of $X$, that is, $I(M_T;X)=0$.
\end{definition}

This protects the rows indexed by $T$ jointly, including relationships among them. We relax independence using total variation, which measures the largest difference between the probabilities that two distributions assign to the same event. We require the joint distribution of $M_T$ and $X$ to be close to the distribution they would have if they were independent.

\begin{definition}[\cite{11339369}]\label{def:approx-individual}
For $T\subseteq[n]$ and $\varepsilon\ge0$, the messages $M_1,\ldots,M_n$ satisfy $\varepsilon$-approximate $T$-individual security with respect to $X$ if $\TV(P_{M_T,X},P_{M_T}P_X)\le\varepsilon$,
where $\TV(P,Q)=\frac12\sum_z|P(z)-Q(z)|$ is the total variation distance. The case $\varepsilon=0$ recovers $T$-individual security.
\end{definition}

For a fixed set $T\subseteq[n]$, let $A_T$ be the matrix formed by the rows indexed by $T$. The following theorem bounds what the entire upload reveals about these rows jointly.

\begin{theorem}\label{thm:rows}
Fix $T\subseteq[n]$ with $|T|\le r$. For Low-Rank Factors, $\TV(P_{A_T,X},P_{A_T}P_X)\le\frac{q^{|T|-r}}{q-1}$. For Low-Rank Ball, $\TV(P_{A_T,X},P_{A_T}P_X)\le\frac{q^{|T|-r}}{q-1}+\beta_{n,r}$, where $\beta_{n,r}$ is the probability that a Low-Rank Ball mask has rank less than $r$. The same bounds hold when $T$ indexes columns.
\end{theorem}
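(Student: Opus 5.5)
The plan is to reduce the joint-versus-product distance to the distance between the distribution of the masked rows $R_T$ and the uniform distribution. Since $A$ is uniform on $\F^{n\times n}$ and independent of $R$, the upload $X=A+R$ is uniform and independent of $R$. Writing $A_T=X_T-R_T$, we get
\[
P_{A_T,X}(a,x)=q^{-n^2}\Pr[R_T=x_T-a],
\]
while $P_{A_T}P_X(a,x)=q^{-|T|n}q^{-n^2}$. Summing over $x_{T^c}$, and then over $x_T$ after the change of variables $z=x_T-a$, gives
\[
\TV(P_{A_T,X},P_{A_T}P_X)=\TV\bigl(P_{R_T},\mathrm{Unif}(\F^{|T|\times n})\bigr).
\]
So it suffices to show that $R_T$ is nearly uniform. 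The column statement follows by applying the same argument to $X^\top=A^\top+R^\top$; both sampling methods are invariant under transposition.

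For Low-Rank Factors, $R_T=U_TV$, where $U_T\in\F^{|T|\times r}$ and $V\in\F^{r\times n}$ are independent and uniform. Condition on $U_T$. If $U_T$ has full row rank $|T|$, the map $v\mapsto U_Tv$ is a surjective linear map, so $U_TV$ is exactly uniform on $\F^{|T|\times n}$. By the mixture (convexity) bound on total variation, the distance is at most $\Pr[\rank U_T<|T|]$. The event $\rank U_T<|T|$ means $c^\top U_T=0$ for some nonzero $c\in\F^{|T|}$, and it suffices to check one $c$ per projective class. For each fixed $c\neq0$, the vector $c^\top U_T$ is uniform on $\F^r$, so it vanishes with probability $q^{-r}$. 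A union bound over the $(q^{|T|}-1)/(q-1)$ classes gives $q^{|T|-r}/(q-1)$.

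For Low-Rank Ball, first condition on the event $\rank R=r$; the complementary event has probability $\beta_{n,r}$ and contributes at most that to the distance. Every rank-$r$ matrix has the same number of factorizations $UV$ with $U$ of full column rank and $V$ of full row rank. Hence, conditioned on rank exactly $r$, we have $R=UV$ with $U$ and $V$ independent and uniform over full-rank matrices. We then repeat the argument with these factors.
\begin{itemize}
\item The event $c^\top U_T=0$ says that the zero-padding of $c$ to $\F^n$ is orthogonal to the column space of $U$. That column space is a uniform $r$-dimensional subspace, so this happens with probability $(q^{n-r}-1)/(q^n-1)\le q^{-r}$.
\item The union bound therefore again gives $q^{|T|-r}/(q-1)$ for the event $\rank U_T<|T|$.
\end{itemize}

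The main obstacle is the remaining step: when $U_T$ has full row rank, $U_TV$ is uniform over full-rank $|T|\times n$ matrices rather than over all of $\F^{|T|\times n}$. This is because $V$ is uniform only among full-rank matrices. This leaves an extra error term of order $q^{|T|-n}/(q-1)$.
\begin{itemize}
\item First try: compute $\Pr[\rank U_T<|T|]$ exactly rather than by the union bound. The slack between the exact value and $q^{|T|-r}/(q-1)$, together with the slack in $\beta_{n,r}$ (the rank-deficient masks already cover part of the non-uniformity), should absorb this term.
\item Fallback: condition on $V$ instead of $U_T$ and use the symmetric argument.
\end{itemize}
If neither works, I would state the bound with this lower-order term made explicit, or fold it into $\beta_{n,r}$.
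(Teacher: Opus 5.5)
Your reduction to $\TV\bigl(P_{R_T},\mathrm{Unif}(\F^{k\times n})\bigr)$ with $k=|T|$ is correct and agrees with the paper, as are the Low-Rank Factors bound and the transposition argument for columns. The gap is the Low-Rank Ball case. You stop at the one step that needs an argument, and none of your fallbacks proves the stated theorem:
\begin{enumerate}[leftmargin=*,itemsep=1pt,topsep=2pt]
\item An explicit extra $q^{k-n}/(q-1)$ term weakens the bound.
\item $\beta_{n,r}$ is a fixed probability that you cannot enlarge.
\item Conditioning on $V$ fails outright, because given $V$ the rows of $R_T$ lie in the fixed row space of $V$.
\end{enumerate}

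The missing observation, which is how the paper closes this case, is that the uniform target has the same structure as $R_T$. A uniform $G\in\F^{k\times n}$, conditioned on full row rank, is uniform over full-row-rank matrices, and it has full row rank with probability $p_{k,n}=\prod_{j=0}^{k-1}(1-q^{j-n})$. Two laws that share a common component with weights $\alpha,\gamma$ are within total variation $1-\min(\alpha,\gamma)$. So you should not pay $1-p_{k,n}$ on top of $\Pr[\rank U_T<k]$. The paper computes the weight for $R_T$ exactly:
\[
\Pr[\rank R_T=k\mid\rank R=r]=\frac{p_{k,r}\,p_{r-k,n-k}}{p_{r,n}}=\frac{p_{k,r}}{p_{k,n}}.
\]
Both this and $p_{k,n}$ are at least $p_{k,r}$, so the distance is at most $1-p_{k,r}\le q^{k-r}/(q-1)$. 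Lower ranks then add $\beta_{n,r}$.

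Your ``first try'' also closes with the union bounds you already have, once you do the arithmetic. You have $\Pr[\rank U_T<k]\le\frac{q^k-1}{q-1}\cdot\frac{q^{n-r}-1}{q^n-1}$ and $1-p_{k,n}\le\frac{q^k-1}{q-1}\,q^{-n}$. Moreover,
\[
\frac{q^{n-r}-1}{q^n-1}+q^{-n}=q^{-r}+\frac{q^{n-r}-1}{q^n(q^n-1)}.
\]
Because $(q^k-1)(q^n-q^r)\le q^n(q^n-1)$, the last term multiplied by $q^k-1$ is at most $q^{-r}$. Hence the two bounds sum to at most $\bigl((q^k-1)q^{-r}+q^{-r}\bigr)/(q-1)=q^{k-r}/(q-1)$. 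The extra error is absorbed by the slack from counting $(q^k-1)/(q-1)$ projective classes rather than $q^k/(q-1)$. Either completion proves the theorem, but as submitted your proposal leaves the Low-Rank Ball bound unproven.
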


Thus, both sampling methods provide approximate $T$-individual security for any fixed set $T$ of at most $r$ rows or columns. For fixed $q$, the error bound for Low-Rank Factors decreases exponentially with $r-|T|$.

\vspace{-0.2cm}
\subsection{Maximal-Correlation Secrecy}
\label{sec:correlation}
We now ask how much the server can improve its estimates of functions of the inputs. Maximal-correlation secrecy~\cite{8316898} bounds this improvement for all functions, including those that depend on both matrices. Let $M$ denote the data we wish to protect, such as $A$ or the pair $(A,B)$, and let $\mathcal T$ denote the server's observation. We compare a function $f(M)$ of the data with a function $g(\mathcal T)$ computed from the observation.

\begin{definition}[\cite{8316898}]\label{def:correlation}
For real-valued random variables $F,G$ with positive variance, their correlation is
\[
\operatorname{Corr}(F,G)=\frac{\E[FG]-\E[F]\E[G]}{\sqrt{\Var(F)\Var(G)}}.
\]
The maximal correlation $\rh(M;\mathcal T)$ is the supremum of $|\operatorname{Corr}(f(M),g(\mathcal T))|$ over real-valued functions $f,g$ for which $f(M)$ and $g(\mathcal T)$ have positive variance. The protocol provides $\rho$-maximal-correlation secrecy if $\rh(M;\mathcal T)\le\rho$ when $M$ is uniform.
\end{definition}

Maximal correlation is zero exactly when the data and the observation are independent. More generally, a bound $\rh(M;\mathcal T)\le\rho$ means that the observation can reduce the mean-square error in estimating any function of $M$ by at most a $\rho^2$ fraction of the error of the best estimate based only on the prior. We next bound maximal correlation for both sampling methods against the complete server view $\mathcal T=(X,Y,Z)$. Since $Z=XY$ is determined by the uploads, it provides no additional information beyond $X,Y$.

\begin{theorem}\label{thm:rho}
Let $A$ be uniform over $\F^{n\times n}$ and let $1\le r<n$. For either sampling method, with $R$ independent of $A$, we have $\rh(A;A+R)\le q^{-r}$, with equality for Low-Rank Factors. For independent uniform $A,B\in\F^{n\times n}$ and independent masks $R,S$ sampled using the same method and rank parameter $r$, independently of $(A,B)$, we have $\rh((A,B);\mathcal T)=\rh(A;X)$.
\end{theorem}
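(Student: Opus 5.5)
The plan is to exploit the fact that $X=A+R$ is an additive-noise channel on the abelian group $G=\F^{n\times n}$ with uniform input. Since $A$ is uniform and independent of $R$, $X$ is also uniform. The conditional-expectation operator $f\mapsto \E[f(A)\mid X]$ is then convolution with $P_R$, so the characters of $G$ diagonalize it. The characters are $\chi_H(a)=\psi(\operatorname{tr}(H^\top a))$ for $H\in G$, where $\psi$ is a fixed nontrivial additive character of $\F$. By the standard characterization of maximal correlation as the second singular value of this operator (Witsenhausen/R\'enyi), I expect
\[
\rh(A;A+R)=\max_{H\neq 0}\bigl|\E[\psi(\operatorname{tr}(H^\top R))]\bigr| .
\]
The constant character $H=0$ is excluded because $f$ must be nonconstant. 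The first step is to verify this formula cleanly: the operator is normal with eigenvalues $\hat P_R(H)$, so the singular values are $|\hat P_R(H)|$.

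For Low-Rank Factors, I compute $\E[\psi(\operatorname{tr}(H^\top UV))]$ by first averaging over the uniform $V$. Write $\operatorname{tr}(H^\top UV)=\operatorname{tr}((U^\top H)^\top V)$, which is a linear form in $V$. Its average over $V$ is $\mathbf 1\{U^\top H=0\}$. Hence the Fourier coefficient equals $\Pr[U^\top H=0]=q^{-r\,\rank(H)}$, because each of the $r$ columns of $U$ must lie in the left kernel of $H$, which has codimension $\rank(H)$. The maximum over $H\ne0$ is attained at $\rank(H)=1$ and equals $q^{-r}$, which gives both the bound and equality.

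For Low-Rank Ball, the uniform distribution on $\Ball_r(0)$ is invariant under $R\mapsto PRQ$ with $P,Q$ invertible. The Fourier coefficient therefore depends only on $s=\rank(H)$, and I may take $H=\sum_{i\le s}e_ie_i^\top$. For $s=1$ the coefficient is
\[
\frac{qN_0-|\Ball_r(0)|}{(q-1)|\Ball_r(0)|},\qquad N_0=\#\{R\in\Ball_r(0):R_{11}=0\},
\]
so the task reduces to counting low-rank matrices with a prescribed zero entry. For general $s$ I would use the bilinear-forms association scheme (Delsarte), in which the rank-$k$ shells have Fourier transforms given by $q$-Krawtchouk polynomials. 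Summing over $k\le r$ and normalizing, I then need to show the result is at most $q^{-r}$ in absolute value for every $s\ge1$. \textbf{This is the main obstacle.}

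The first thing I would try is to avoid Krawtchouk sums altogether. The idea is to write the uniform ball mask as a mixture $R=UV$, with $U$ uniform on a suitable set such as a basis for a uniformly chosen column space of dimension $\le r$, and $V$ uniform given $U$. Averaging over $V$ then gives $\Pr[\text{col}(R)\subseteq\ker H^\top]$-type quantities, which are nonnegative. I would then bound that probability by $q^{-r\rank(H)}$ via a subspace-counting argument, namely that the fraction of $\le r$-dimensional column spaces inside a codimension-$s$ subspace is at most $q^{-rs}$. If this fails, I fall back to the explicit count of $N_0$ for $s=1$, together with a monotonicity-in-$s$ argument.

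For the second part I would use the tensorization property of maximal correlation. The pairs $(A,X)$ and $(B,Y)$ are independent, so $\rh((A,B);(X,Y))=\max\{\rh(A;X),\rh(B;Y)\}$. Because $B$ is also uniform on $\F^{n\times n}$ and $S$ has the same law as $R$ (same method and $r$, and the inputs are square), $\rh(B;Y)=\rh(A;X)$. Finally, $Z=XY$ is a deterministic function of $(X,Y)$. Every $g(X,Y,Z)$ is therefore a function of $(X,Y)$ and conversely, so $\rh((A,B);\mathcal T)=\rh((A,B);(X,Y))=\rh(A;X)$.
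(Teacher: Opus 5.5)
Your formula $\rh(A;A+R)=\max_{H\ne0}|\E[\chi_H(R)]|$, your Low-Rank Factors computation, and your tensorization step for $(X,Y,Z)$ are correct and match the paper. The gap is the Low-Rank Ball bound. You leave it open, and the route you propose for it cannot work.

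The uniform law on $\Ball_r(0)$ is not a nonnegative mixture of the laws $\mu_W$, where $\mu_W$ is uniform on $\{M:\operatorname{col}(M)\subseteq W\}$ and has Fourier coefficients $\mathbf 1\{\operatorname{col}(H)\subseteq W^\perp\}$. A rank-$r$ matrix $M$ receives mass only from $W=\operatorname{col}(M)$. So the total weight on $r$-dimensional $W$ would have to be $\binom{n}{r}_q q^{rn}/V_n(r)>1$, which forces negative weights elsewhere. The ball's coefficients can in fact be negative. For $q=2$, $n=2$, $r=1$, $H=I$, the ten singular matrices give $\sum(-1)^{\operatorname{tr}M}=4-6=-2$, a coefficient of $-1/5$.

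If you instead take $V$ of full row rank given $U$ to make the mixture exact, the $V$-average is no longer an indicator and is signed. Your subspace-counting bound also fails for column spaces of dimension below $r$, since the zero space lies in every subspace. The fallback ``monotonicity in $s$'' is exactly what needs proof, and you give no argument for it.

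The missing idea is the paper's one-step block reduction. Take $H=\operatorname{diag}(1,H')$, write $M=\left(\begin{smallmatrix}a&b\\c&D\end{smallmatrix}\right)$, and sum over the corner entry $a$ first.
\begin{itemize}
\item If $\rank(D)<r$, the condition $\rank(M)\le r$ holds for all $a$ or for none. So $\sum_a\psi(a)=0$ kills these terms.
\item If $\rank(D)=r$, the condition forces $b,c$ into the row and column spaces of $D$. After normalizing $D$ to $\operatorname{diag}(I_r,0)$, it fixes $a=b_1c_1$, and $\sum_{b_1,c_1}\psi(b_1c_1)=q^r$.
\end{itemize}
Hence $\sum_{\rank(M)\le r}\chi_H(M)=q^r\sum_{\rank(D)=r}\chi_{H'}(D)$. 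By the triangle inequality this has absolute value at most $q^rC_{n-1}(r)$, with equality for $\rank(H)=1$. Each rank-$r$ block $D$ has $q^{2r}$ rank-$r$ extensions, so $V_n(r)\ge q^{2r}C_{n-1}(r)$. This gives the bound $q^{-r}$ for every $H\ne0$ at once, with no Krawtchouk sums or sign information needed.
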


The bound limits how much the server can improve its guesses about the inputs. For example, over $\mathbb{F}_2$, a given entry of a uniform matrix $A$ is equally likely to be zero or one. Before seeing the uploads, the server can guess its value correctly with probability $1/2$. After seeing the complete view $\mathcal T$, this probability is at most $1/2+2^{-r-1}$. For $r=10<n$, it is less than $50.05\%$, even if the server has unlimited computational power. More generally, for any fixed binary function of $(A,B)$, the improvement over the best guess based on the prior is at most $q^{-r}/2$~\cite{8316898}.

The user can make the server's improvement over the best prior guess tend to zero as the matrices grow, while keeping encoding and decoding close to quadratic. Fix $c>0$, and set $r=\lceil c\log_q n\rceil$. Then, the maximal correlation is at most $n^{-c}$ with $O(n^2\log n)$ field operations for sufficiently large $n$.

We next show that the bound $q^{-r}$ is asymptotically optimal for fixed $q$ and $r=o(n)$. The converse holds for every input-independent additive mask of rank at most $r$, even when combined with secret invertible transformations.

\begin{theorem}\label{thm:converse}
Let $A$ be uniform over $\F^{n\times n}$ and let $X=L_1(A+K)L_2$, where $(L_1,L_2,K)$ is independent of $A$, $L_1,L_2$ are invertible, and $\rank(K)\le r\le n-2$ almost surely. Then
\begin{equation}\label{eq:lower}
\rh(A;X)\ge\frac{q^{2n-r}-2q^n+1}{(q^n-1)^2}\ge\frac{q^{-r}}2.
\end{equation}
\end{theorem}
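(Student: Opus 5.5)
The plan is to lower bound $\rh(A;X)$ with one explicit pair of test functions built from kernels. Rank statistics are invariant under the secret transformations $L_1,L_2$, and a rank-$r$ mask can only disturb the kernel of $A$ in $r$ directions. Set
\[
f(a)=\#\{v\in\F^n\setminus\{0\}:av=0\}=q^{\dim\ker a}-1,
\]
and $g=f$ applied to $X$. Then $g(X)=\#\{w\ne0:(A+K)L_2w=0\}$. Since $w\mapsto L_2w$ is a bijection on nonzero vectors, $g(X)=\#\{u\ne0:(A+K)u=0\}$, so $L_1,L_2$ disappear. Also, conditioned on $(L_1,L_2,K)$, the matrix $A+K$ is uniform, so $X$ is uniform. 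Hence $g(X)$ has the same law as $f(A)$, and $\Var g(X)=\Var f(A)$. This reduces the problem to computing a covariance.

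\textbf{Step 1: moments of $f(A)$.} Write $f(A)=\sum_{v\ne0}\mathbf 1\{Av=0\}$. For uniform $A$:
\begin{itemize}[leftmargin=*,itemsep=1pt,topsep=2pt]
\item $\Pr[Av=0]=q^{-n}$.
\item If $v,v'$ are linearly independent, $\Pr[Av=0,Av'=0]=q^{-2n}$.
\item If $v'=cv$, the two events coincide, so the probability is $q^{-n}$.
\end{itemize}
This gives $\E f=(q^n-1)q^{-n}$ and
\[
\E f^2=(q^n-1)(q^n-q)q^{-2n}+(q^n-1)(q-1)q^{-n},
\]
hence
\[
\Var f=(q^n-1)q^{-2n}\bigl((q-1)q^n-(q-1)\bigr)=(q-1)(q^n-1)^2q^{-2n}.
\]

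\textbf{Step 2: cross moment, conditioned on $K$.} Expand $\E[f(A)g(X)\mid K]=\sum_{v,u\ne0}\Pr[Av=0,(A+K)u=0]$ and split by the relation between $u$ and $v$:
\begin{itemize}[leftmargin=*,itemsep=1pt,topsep=2pt]
\item If $u,v$ are independent, the constraints $Av=0$ and $Au=-Ku$ fix $A$ on two independent vectors, so the probability is $q^{-2n}$, exactly as in the independent case.
\item If $u=cv$, then $Av=0$ forces $Ku=0$. The probability is $q^{-n}$ when $v\in\ker K$ and $0$ otherwise.
\end{itemize}
Therefore
\[
\operatorname{Cov}(f,g\mid K)=(q-1)q^{-n}\bigl(|\ker K|-1\bigr)-(q^n-1)(q-1)q^{-2n}.
\]
Since $\rank K\le r$, we have $|\ker K|\ge q^{n-r}$ almost surely, so averaging over $K$ gives
\[
\operatorname{Cov}\ge(q-1)q^{-2n}\bigl(q^{2n-r}-q^n-q^n+1\bigr).
\]
Dividing by $\Var f$ from Step 1 yields
\[
\operatorname{Corr}(f(A),g(X))\ge\frac{q^{2n-r}-2q^n+1}{(q^n-1)^2},
\]
which is the first inequality in \eqref{eq:lower}.

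\textbf{Step 3: the simpler bound.} Compare $q^{2n-r}-2q^n+1\ge\tfrac12q^{-r}(q^n-1)^2$. After multiplying by $2q^r$, this becomes
\[
q^{2n}-4q^{n+r}+2q^r+2q^n-1\ge0.
\]
This holds because $4q^{n+r}\le q^{2n}$ when $q^{n-r}\ge4$, which is exactly where $r\le n-2$ is needed. The case $q=2$, $r=n-2$ is borderline (there $4q^{n+r}=q^{2n}$) and needs the remaining positive terms; I will check it separately.

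I expect the main obstacle to be Step 2: the case analysis must be airtight, and the bijection argument that removes $L_1,L_2$ must remain valid even though $L_2$ is random and may be correlated with $K$. This is handled by conditioning on the entire triple $(L_1,L_2,K)$ before summing over $u$.
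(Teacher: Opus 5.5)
Your proposal is correct and follows essentially the paper's proof: the same kernel-size test function (yours differs from the paper's $|\ker A|$ only by the constant $-1$, which leaves the correlation unchanged), the same split into independent pairs and dependent pairs $u=cv$ that require $Kv=0$, and the same variance, giving the paper's identity $\operatorname{Corr}=\bigl(\E[q^{-\rank K}]-2q^{-n}+q^{-2n}\bigr)/(1-q^{-n})^2$. Your borderline case $q=2$, $r=n-2$ needs no separate check, since $q^{2n}-4q^{n+r}\ge0$ still holds (with equality) and $2q^r+2q^n-1>0$; the paper instead uses $2q^{-n}\le q^{-r}/2$, which rests on the same condition $q^{n-r}\ge4$.
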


The server can ignore additional observations, so any view $\mathcal T$ containing $X$ satisfies $\rh((A,B);\mathcal T)\ge\rh(A;X)$. The lower bound therefore also applies to the complete server view. Together with Theorem~\ref{thm:rho}, this shows that both sampling methods are within a factor of two of the optimum for $r\le n-2$. For fixed $q$ and $r=o(n)$, the lower bound is $(1-o(1))q^{-r}$, proving asymptotic optimality within this class.

Although Theorem~\ref{thm:rho} assumes independent uniform inputs, its guessing guarantee extends to nonuniform or correlated inputs. The bound then depends on how concentrated the joint input distribution is. For $M=(A,B)$, define $\Delta=\log_2\bigl(q^{2n^2}\sum_m\Pr[M=m]^2\bigr)$, which is zero for independent uniform inputs. For either sampling method, the improvement over the best prior guess is at most $2^{\Delta/2}q^{-r}$ for any fixed function of $M$, and half this for a binary function~\cite[Thm.~1]{8316898}.

\vspace{-0.2cm}
\section{Proofs}
\label{sec:proofs}
\vspace{-0.2cm}
\begin{proof}[Proof of Theorems~\ref{thm:complexity} and~\ref{thm:posterior}]
Since $Z=AB+AS+RB+RS$, decoding returns $AB$. Computing $RS=U((VQ)W)$ and using $r\le\min\{m,n,p\}$ gives the cost bound. Uniform rank-ball masks give a constant likelihood on compatible inputs and zero elsewhere, so Bayes' rule gives both posterior statements.
\end{proof}

Let $C_{d,e}(t)=\prod_{j=0}^{t-1}\frac{(q^d-q^j)(q^e-q^j)}{q^t-q^j}$ count rank-$t$ matrices and put $V_{d,e}(r)=\sum_{t=0}^rC_{d,e}(t)$. To sample a Low-Rank Ball mask, choose $t$ with probability $C_{d,e}(t)/V_{d,e}(r)$, then multiply independent uniform full-rank factors of sizes $d\times t$ and $t\times e$. Each rank-$t$ matrix has equally many such factorizations, so the product is uniform. Pad the factors to width $r$, i.e., inserting zeros. Rejection sampling takes $O((d+e)r^2)$ expected field operations offline, excluding integer arithmetic for the rank probabilities. Write $C_n=C_{n,n}$ and $V_n=V_{n,n}$.

\begin{proof}[Proof of Proposition~\ref{prop:dp}]
For each entry $i$, let $h_i(K_{-i})$ be a most likely value of $K_i$ given the others and put $p_i=\Pr[K_i=h_i(K_{-i})]$. Apply DP to the event $x_i=h_i(x_{-i})$, comparing zero with every nonzero change to entry $i$. Summing gives $(q-1)p_i\le e^\varepsilon(1-p_i)+(q-1)\delta$. A distribution whose largest probability is $p$ has entropy at least $2(1-p)$ bits. Applying this conditionally and using the chain rule gives $H(K)\ge\sum_i H(K_i\mid K_{-i})
\ge\frac{2n^2(q-1)(1-\delta)}{e^\varepsilon+q-1}$.
Counting factor pairs gives $H(K)\le2nr\log_2q$, proving~\eqref{eq:dp}. For $\delta=0$, singleton events make the support invariant under every entry translation, forcing full support and contradicting $r<n$.
\end{proof}

\begin{proof}[Proof of Theorem~\ref{thm:rows}]
Put $k=|T|$. Since $A$ is uniform, $X$ and $R$ are independent. Given $X=x$, we have $A_T=x_T-R_T$, so the required distance equals the distance of $R_T$ from uniform. Write $p_{k,d}=\prod_{j=0}^{k-1}(1-q^{j-d})$. For Low-Rank Factors, $U_TV$ is uniform when $U_T$ has full row rank, an event of probability $p_{k,r}$. The distance is therefore at most $1-p_{k,r}\le q^{k-r}/(q-1)$.


For uniform rank-$r$ masks, use full-rank factors. Then
$\operatorname{rank}(R_T)=\operatorname{rank}(U_T)$. Conditioning a uniform $n\times r$ matrix $U$ on
having full column rank gives $\Pr[\operatorname{rank}(R_T)=k] = \Pr[\operatorname{rank}(U_T)=k \mid \operatorname{rank}(U)=r] = \frac{p_{k,r}p_{r-k,n-k}}{p_{r,n}} =
\frac{p_{k,r}}{p_{k,n}}$, where the last equality uses
$p_{r,n}=p_{k,n}p_{r-k,n-k}$. Conditional on full row rank, $R_T$ is uniform by invariance under invertible column operations. A uniform $k\times n$ matrix has the same conditional law, with event probability $p_{k,n}$. Both probabilities are at least $p_{k,r}$, giving a distance of at most $1-p_{k,r}$. Mixing lower ranks adds at most $\beta_{n,r}=V_n(r-1)/V_n(r)$. Transposition gives the column bounds.
\end{proof}

\begin{proof}[Proof of Theorem~\ref{thm:rho}]
Let $\chi_H(M)=\psi(\sum_{ij}H_{ij}M_{ij})$, where $\psi$ is a nontrivial additive character of $\F$. These characters form an orthonormal basis, and $\E[\chi_H(A)\mid X=x]=\chi_H(x)\overline{\E[\chi_H(R)]}$. Thus, conditional expectation is diagonal, giving $\rh(A;X)=\max_{H\ne0}|\E[\chi_H(R)]|$. For Low-Rank Factors, averaging over $V$ gives $\E[\chi_H(UV)]=\Pr[H^TU=0]=q^{-r\rank(H)}$, with maximum $q^{-r}$.

For Low-Rank Ball, invariance under invertible row and column operations lets us take $H=\operatorname{diag}(1,H')$ for $H\ne0$. Write $M=\left(\begin{smallmatrix}a&b\\c&D\end{smallmatrix}\right)$. For fixed $b,c,D$ with $\rank(D)<r$, either every $a$ satisfies the rank bound or none does, so the sum over $a$ vanishes. When $\rank(D)=r$, the rank bound requires $b,c$ in its row and column spaces and, after reducing $D$ to $\operatorname{diag}(I_r,0)$, fixes $a=b_1c_1$. Summing $\psi(b_1c_1)$ over $c_1$ vanishes unless $b_1=0$, so the total is $q^r$. Hence $\sum_{\rank(M)\le r}\chi_H(M)=q^r\sum_{\rank(D)=r}\chi_{H'}(D)$.
The absolute value is at most $q^rC_{n-1}(r)$, with equality for rank-one $H$. Each rank-$r$ block $D$ has $q^{2r}$ rank-$r$ extensions, so $V_n(r)\ge q^{2r}C_{n-1}(r)$. Thus $\rh(A;X)=q^rC_{n-1}(r)/V_n(r)\le q^{-r}$.

For independent inputs and masks, the joint coefficients are products with the same largest nonconstant magnitude. Since $Z=XY$, this gives $\rh((A,B);(X,Y,Z))=\rh(A;X)$.
\end{proof}

\begin{proof}[Proof of Theorem~\ref{thm:converse}]
Let $g(A)=|\ker A|=q^{n-\rank(A)}$ and $N=q^n$. Both $A,X$ are uniform, and rank invariance gives $g(X)=g(A+K)$. Counting kernel vectors gives $\E g(A)=2-N^{-1}$ and $\Var(g(A))=(q-1)(1-N^{-1})^2$.

Fix $K$ of rank $t$. We compute $\E[g(A)g(A+K)]$ by counting pairs $(v,w)$ with $Av=0$ and $Aw=-Kw$. There are $(N-1)(N-q)$ linearly independent pairs, each satisfying these equations with probability $N^{-2}$. For a dependent pair $w=cv\ne0$, the equations are consistent exactly when $Kv=0$. This gives $(q-1)(q^{n-t}-1)$ pairs, each contributing $N^{-1}$. The pair $(0,0)$ contributes $1$, and pairs with exactly one zero contribute $2(N-1)/N$. Averaging over $K$, subtracting the squared mean, and dividing by the variance gives $\operatorname{Corr}(g(A),g(X))=\frac{\E[q^{-\rank(K)}]-2N^{-1}+N^{-2}}{(1-N^{-1})^2}$.

Using $\rank(K)\le r$ gives the first bound in~\eqref{eq:lower}; $2q^{-n}\le q^{-r}/2$ for $r\le n-2$ gives the second.
\end{proof}
\vspace{-0.3cm}

\vspace{-0.2cm}
\section{Acknowledgment}
No external funding supported this work. The authors declare no conflicts of interest. 
ChatGPT (OpenAI) was used as an auxiliary tool to identify relevant literature and to check algebraic derivations. All references, mathematical statements, and proofs were independently verified by the authors.

\vspace{-0.2cm}
\section{Compliance with Ethical Standards}
\vspace{-0.2cm}
This work is theoretical and does not involve human or animal subjects. No ethical approval was required.

\vspace{-0.1cm}
\begingroup
\small
\bibliographystyle{IEEEbib}
\bibliography{refs}
\endgroup
\end{document}